\newcommand{\prt}[2]{{#1}/{#2}} 
\renewcommand{\O}{\mathcal{O}}
\newcommand{\remove}{\mathit{Remove}}
\renewcommand{\sim}{\mathit{Sim}}
\newcommand{\tr}{\delta}
\newcommand{\labs}{\Sigma}
\newcommand{\wordsim}{{\sim_\init}}
\newcommand{\wordsimrel}{\rho}
\newcommand{\pre}{\mathrm{\delta}^{-1}}
\newcommand{\post}{\mathrm{\delta}}
\newcommand{\init}{\mathit{I}}
\newcommand{\inp}[1]{\mathrm{in}(#1)}
\newcommand{\out}[1]{\mathrm{out}(#1)}
\newcommand{\ass}{\leftarrow}
\newcommand{\prev}{\mathsf{prev}}
\newcommand{\outpre}{\mathit{Out}}
\newcommand{\rel}{\mathit{Rel}} 
\newcommand{\Split}{\mathit{Split}}
\newcommand{\lara}[1]{\langle #1 \rangle}
\renewcommand{\above}[2]{#1(#2)}
\newcommand{\bellow}[2]{#1^{-1}(#2)}
\newcommand{\cnt}{\mathit{Count}}
\newcommand{\trrelset}{\{\delta_a \mid a\in\labs\}}
\newcommand{\rankof}{r}
\newcommand{\maxrank}{r_\mathsf{m}}
\newcommand{\nat}{\mathbb{N}}
\renewcommand{\part}[2]{{#1}/{#2}}
\newcommand{\us}{U} 
\newcommand{\lpre}{D} 
\newcommand{\vect}[2]{(#1_1 \dotsc #1_{#2})}
\newcommand{\trans}[4]{(#1_1\ldots #1_#2,#3,#4)}
\newcommand{\downsim}{\mathit{D}}
\newcommand{\upsim}{\mathit{U}}
\newcommand{\lhsof}{{\it Lhs}}
\newcommand{\hole}{\square}
\newcommand{\envof}{Env}
\newcommand{\Aref}{}
\newcommand{\Algo}{
\NoCaptionOfAlgo
\caption{\small LRT Algorithm}
\KwIn{an LTS $T = (S,\labs,\trrelset)$, partition-relation pair
$\lara{P_\init,\rel_\init}$
}
\KwOut{partition-relation pair $\lara{P,\rel}$}

\BlankLine

\tcc{initialization}
$\lara{P,\rel} \ass \lara{P_I,\rel_I}$\tcc*[f]{$\ass \lara{P_{\init\cap\outpre},\rel_{\init\cap\outpre}}$}

\ForAll(\tcc*[f]{$a\in\inp{B}$}){$B\in P$ \KwSty{and}  $a\in\labs$}{
	\lForAll(\tcc*[f]{$v\in\pre_a(S)$}){$v\in S$}{
		$\cnt_a(v,B) = |\post_a(v)\cap\bigcup\above\rel B|$
	}
	$\remove_a(B)\ass S\setminus\pre_a(\bigcup\above \rel B)$\tcc*[f]{$\ass \pre_a(S){\setminus}\pre_a(\bigcup\above \rel B)$}
}

\BlankLine
\tcc{computation}
\While{\KwSty{exists} $B\in P$ \KwSty{and} $a\in\labs$ \KwSty{such that} $\remove_a(B)\neq\emptyset$}{
        $\remove\ass\remove_a(B)$\;$\remove_a(B)\ass\emptyset$\;
        $\lara{P_{\prev},\rel_\prev}\ass \lara{P,\rel}$\;
        $P\ass\Split(P,\remove)$\;
	$\rel \ass \{(C,D)\in P\times P\mid (C_\prev,D_\prev)\in\rel_\prev\}$\; 
        \ForAll(\tcc*[f]{$b\in\inp{C}$}){$C\in P$ \KwSty{and} $b\in\labs$}{
		$\remove_b(C)\ass \remove_b(C_\prev)$\;
		\lForAll(\tcc*[f]{$v\in\pre_b(S)$}){$v\in S$}{
			$\cnt_b(v,C)\ass \cnt_b(v,C_\prev)$
		}
        }
        \ForAll{$C\in P$ \KwSty{such that} $C\cap\pre_a(B)\neq\emptyset$}{
                \ForAll{$D\in P$ \KwSty{such that} $D\subseteq \remove$}{
			\If{$(C,D)\in\rel$}{
				$\rel\ass\rel\setminus\{(C,D)\}$\;
				\ForAll(\tcc*[f]{$b\in \inp D\cap\inp{C}$}){$b\in\labs$ \KwSty{and} $v\in \pre_b(D)$}{
					$\cnt_b(v,C)\ass\cnt_b(v,C)-1$\;
					\lIf{$\cnt_b(v,C)=0$}{
						$\remove_b(C)\ass\remove_b(C)\cup\{v\}$\hspace{-2pt}
					}
				}
			}
		}
        }
}
}
\newcommand{\TheAlgorithm}{
\begin{algorithm}[t]\label{algorithm}
\Algo
\end{algorithm}
}
\newcommand{\tableiii}{
\begin{table}[ht]
\centering{
\scalebox{0.9}{
\begin{tabular}{|l|rrrr|rrr|rr|rr|}
\hline
 & \multicolumn{4}{|c|}{\bf TA} & \multicolumn{3}{|c|}{\bf LTS} & \multicolumn{2}{|c|}{\bf LRT}  & \multicolumn{2}{|c|}{\bf OLRT}\\
\ccb<source> & \cc<$|Q|$> & \cc<$|\Sigma|$> & \cc<$\maxrank$> & \ccr<$|\Delta|$> & \cc<$|S|$> & \cc<$|\Sigma|$> & \ccr<$|\delta|$> & \cc<time> & \ccr<space> & \cc<time> & \ccr<space> \\
\hline
random  & 16 &  16 & 2 &  245 &  472 & 17 &   952 &   1.03 &    96.5 &  0.09 &   4.8\\
random  & 32 &  16 & 2 &  935 & 1791 & 17 &  3700 &  18.73 &  1253.8 &  1.37 &  54.7\\
random  & 64 &  16 & 2 & 3725 & 7126 & 17 & 14824 & 405.89 & 14173.9 & 22.83 & 752.6\\
\hline
random  & 32 &  32 & 2 & 1164 & 2204 & 33 &  4548 &  64.10 & 3786.7 &  2.36 & 193.4\\
random  & 32 &  64 & 2 & 2026 & 3787 & 65 &  7874 & \multicolumn{2}{c|}{o.o.m.} &  6.72 & 245.8\\
\hline
ARTMC\footnotemark[1] & 47 & 132 & 2 &  837 & 1095 & 133 & 3344 &  66.46 & 4183.2 &  0.69 &  68.2\\
\hline
ARTMC & \multicolumn{7}{c|}{variable\footnotemark[2]} & 12669.94  & 4412.6  & 400.62 & 106.6\\
\hline
\end{tabular}
}
\caption{Upward simulation results}
\label{res3}
}
\end{table}
}
\newcommand{\tableii}{
\begin{table}[ht]
\centering{
\scalebox{0.9}{
\begin{tabular}{|l|rrrr|rrr|rr|rr|}
\hline
 & \multicolumn{4}{|c|}{\bf TA} & \multicolumn{3}{|c|}{\bf LTS} & \multicolumn{2}{|c|}{\bf LRT}  & \multicolumn{2}{|c|}{\bf OLRT}\\
\ccb<source>  & \cc<$|Q|$> & \cc<$|\Sigma|$> & \cc<$\maxrank$> & \ccr<$|\Delta|$> & \cc<$|S|$> & \cc<$|\Sigma|$> & \ccr<$|\delta|$> & \cc<time> & \ccr<space> & \cc<time> & \ccr<space>\\
\hline
random  & 16 &  16 & 2 &  245 &  184 &  18 &  570 &   0.06  &    6.2 &  0.02 &   1.4\\
random  & 32 &  16 & 2 &  935 &  655 &  18 & 2165 &   0.87  &   74.4 &  0.21 &  14.4\\
random  & 64 &  16 & 2 & 3725 & 2502 &  18 & 8568 &  26.63  & 1417.9 &  3.50 & 195.4\\
\hline
random  & 32 &  32 & 2 & 1164 &  719 &  34 & 2511 &   2.67  &  166.6 &  0.23 &  16.8\\
random  & 32 &  64 & 2 & 2026 &  925 &  66 & 3780 &  12.17  &  623.5 &  0.56 &  25.4\\
\hline
ARTMC\footnotemark[1] & 47 & 132 & 2 &  837 &  241 & 134 & 1223 &   0.84  & 70.6 &  0.05 &   6.2\\
\hline
ARTMC & \multicolumn{7}{c|}{variable\footnotemark[2]} & 517.98  & 116.2  & 80.84 & 22.1\\
\hline
\end{tabular}
}
\caption{Downward simulation results}
\label{res2}
}
\end{table}
\footnotetext[1]{One of the automata selected from the ARTMC set.}
\footnotetext[2]{A set containing 10\,305 tree automata of variable size (up to 50 states
and up to 1000 transitions per automaton). The results show the total amount of time required
for the computation and the peak size of allocated memory.}
}
\newcommand{\tablei}{
\begin{table}[ht]
\centering{
\scalebox{0.9}{
\begin{tabular}{|l|rrr|rr|rr|}
\hline
 & \multicolumn{3}{c|}{\bf LTS} & \multicolumn{2}{c|}{\bf LRT}  & \multicolumn{2}{c|}{\bf OLRT}\\
\ccb<source> & \cc<$|S|$> & \cc<$|\Sigma|$> & \ccr<$|\delta|$> & \cc<time> & \ccr<space> & \cc<time> & \ccr<space>\\
\hline
random &   256 & 16 &   416 &   0.12 &     9.6 &   0.02 &    1.9\\
random &  4096 & 16 &  3280 &  13.82 &   714.2 &   2.02 &   78.2\\
random & 16384 & 16 & 26208 & \multicolumn{2}{c|}{o.o.m.} & 268.85 & 4514.9\\
\hline
random &  4096 & 32 &  6560 &  62.09 &  1844.2 &   4.36 &  121.4\\
random &  4096 & 64 & 13120 & 158.38 &  3763.2 &   6.59 &  211.2\\
\hline
pc     &  1251 & 43 & 49076 &   7.52 &   418.1 &   2.63 &  119.0\\
rw     &  4694 & 11 & 20452 &  81.28 &  3471.8 &  19.25 &  989.3\\
lr     &  6160 & 35 & 90808 & 390.91 & 12640.8 &  45.69 & 1533.6\\
\hline
\end{tabular}
}
\caption{LTS simulation results}
\label{res1}
}
\end{table}
}
\def\cc<#1>{\multicolumn{1}{c}{#1}}
\def\ccr<#1>{\multicolumn{1}{c|}{#1}}
\def\ccb<#1>{\multicolumn{1}{|c|}{#1}}
\begin{document}


\title{Optimizing an LTS-Simulation Algorithm\thanks{This is a version of Technical Report No. FIT-TR-2009-03 of Faculty of Information Technology, Brno University of Technology, accompanying the work published originally as \cite{FITPUB9733}. }}

\author{
  Luk\'{a}\v{s} Hol\'{\i}k$^1$ \and
  Ji\v r\' i \v Sim\'a\v cek$^{1,2}$
}

\institute{\scriptsize
  $^1$ FIT BUT, Bo\v{z}et\v{e}chova 2, 61266 Brno, Czech Republic \\
  $^2$ VERIMAG, UJF, 2. av. de Vignate, 38610 Gi\`{e}res, France \\
  email: \email{holik@fit.vutbr.cz}
}

\maketitle
\thispagestyle{plain}

\begin{abstract}
When comparing the fastest algorithm for computing the largest simulation preorder over
Kripke structures with the one for labeled transition systems (LTS), there is
a noticeable time and space complexity blow-up proportional to the size of the
alphabet of an LTS.  In this paper, we present optimizations that suppress this
increase of complexity and may turn a large alphabet of an LTS to an advantage. Our
experimental results show significant speed-ups and memory savings. Moreover,
the optimized algorithm allows one to improve asymptotic complexity of
procedures for computing simulations over tree automata using recently proposed
algorithms based on computing simulation over certain special LTS derived from a tree automaton. 
\end{abstract}

\section{Introduction}
%
A practical limitation of automated methods dealing with LTSs---such as
LTL model checking, regular model checking, etc.---is often the size of generated
LTSs. 
One of the well established approaches to overcome this problem is the reduction
of an LTS using a suitable equivalence relation according to which the
states of the LTS are collapsed.
A good candidate for such a relation is simulation equivalence. 
It strongly preserves logics like  $ACTL^*$, $ECTL^*$, and $LTL$ \cite{dams-generation,grumberg-model,henzinger-computing},
and with respect to its reduction power and computation cost, it offers a
desirable compromise 
among the other common candidates, such as bisimulation 
equivalence \cite{piage-three,sawa-behavioural} and language equivalence. 
%
The currently fastest LTS-simulation algorithm
(denoted as LRT---labeled RT) has been published in \cite{abdulla-computing}. 
It is a straightforward modification of the fastest algorithm (denoted as RT, standing for Ranzato-Tapparo) for computing
simulation over Kripke structures \cite{ranzato-new}, which improves the
algorithm from \cite{henzinger-computing}. 
The time complexity of RT amounts to $\O(|P_{\sim}||\delta|)$, space complexity
to  $\O(|P_{\sim}||S|)$.
In the case of LRT, we obtain time complexity $\O(|P_{\sim}||\delta| + |\labs||P_\sim||S|)$ and space complexity $\O(|\labs||P_{\sim}||S|)$. 
Here, $S$ is the set of states, $\delta$ is the transition relation, $\labs$ is the
alphabet and $P_\sim$ is the partition of $S$ according to the simulation
equivalence.
%
The space complexity blow-up of LRT is caused by indexing the data structures of RT
by the symbols of the alphabet.

In this paper, we propose an optimized version of LRT (denoted OLRT) that lowers the above described blow-up.
We exploit the fact that not all states of an LTS have incoming and outgoing
transitions labeled by all symbols of an alphabet, which allows us to reduce
the memory footprint of the data structures used during the computation. 
Our experiments show that the optimizations we propose lead to significant
savings of space as well as of time in many practical cases.
Moreover, we have achieved a promising reduction of the asymptotic complexity of
algorithms for computing tree-automata simulations from
\cite{abdulla-computing} using OLRT.
%

\section{Preliminaries}
Given a binary relation $\rho$ over a set $X$, 
we use $\above \rho x$ to denote the set $\{y\mid (x,y)\in \rho\}$. 
Then, for a set $Y\subseteq X$, $\rho(Y) = \bigcup\{\rho(y)\mid y\in Y\}$.
A \emph{partition-relation pair} over $X$ is a pair $\lara{P,Rel}$ where
$P \subseteq 2^X$ is a partition of $X$ (we call elements of $P$ \emph{blocks})
and $Rel \subseteq P \times P$. A partition-relation pair $\lara{P,Rel}$
\emph{induces} the relation ${\rho} = {\bigcup_{(B,C)\in
Rel} B \times C}$. We say that $\lara{P,\rel}$ is the \emph{coarsest partition-relation pair} inducing 
$\rho$ if any two $x,y\in X$ are in the same block of $P$ if and only if
$\above \rho x = \above \rho y$ and $\bellow \rho x = \bellow \rho y$. 
Note that in the case when $\rho$ is a preorder and
$\lara{P,\rel}$ is coarsest, then $P$ is the set of equivalence classes of
$\rho\cap\rho^{-1}$ and $\rel$ is a partial order. 

A \emph{labeled transition system (LTS)} is a tuple
\mbox{$T=(S,\labs,\trrelset)$}, where $S$ is a finite set of
states, $\labs$ is a finite set of labels, and for each $a\in\labs$,
$\tr_a\subseteq{ S\times S}$ is an $a$-labeled transition relation. We use $\delta$ to denote
$\bigcup_{a\in\labs}\delta_a$.
%
%
A {\it simulation} over $T$ is a binary relation
$\wordsimrel$ on $S$ such that if $(u,v)\in\wordsimrel$, then for all
$a\in\labs$ and $u'\in\post_a(u)$, there exists $v'\in\post_a(v)$ such that
$(u',v')\in\wordsimrel$. 
It can be shown that for a given LTS $T$ and an {\it initial
preorder} $\init \subseteq S \times S$,  there is a unique maximal simulation
$\wordsim$ on $T$ that is a subset of $\init$, and that $\wordsim$ is a preorder (see~\cite{abdulla-computing}).



\section{The Original LRT Algorithm}
\label{oltsa}
In this section, we describe the original version of the algorithm presented in
\cite{abdulla-computing}, which we denote as LRT (see Algortihm \ref{algorithm}).

\TheAlgorithm

The algorithm  gradually refines a partition-rela\-tion pair $\lara{P,\rel}$,
which is initialized as the coarsest partition-relation pair inducing an
initial preorder $\init$. After its termination, $\lara{P,\rel}$ is the
coarsest partition-relation pair inducing $\wordsim$. The basic invariant of
the algorithm is that the relation induced by $\lara{P,\rel}$ is always a
superset of $\wordsim$.

The while-loop refines the partition $P$ and then prunes the relation $\rel$
in each iteration of the while-loop.
The role of the $\remove$ sets can be explained as follows:
During the initialization, every $\remove_a(B)$ is filled by states $v$ such
that $\post_a(v)\cap\bigcup\rel(B) = \emptyset$ 
(there is no $a$-transition leading from $v$ ``above'' $B$
wrt. $\rel$). During the computation phase, $v$ is added into $\remove_a(B)$
after $\post_a(v)\cap\bigcup\rel(B)$ becomes empty (because of pruning $\rel$
on line 17). 
Emptiness of
$\post_a(v)\cap \rel(B)$ is tested on line 20 using counters $\cnt_a(v,B)$,
which record the cardinality of $\post_a(v)\cap\rel(B)$.
From the definition of simulation, and because the
relation induced by $\lara{P,\rel}$ is always a superset of $\wordsim$,
$\post_a(v)\cap\bigcup\rel(B) = \emptyset$
implies that for all $u\in\pre_a(B)$, $(u,v)\not\in\wordsim$ ($v$ cannot
simulate any $u\in\pre_a(B)$).
To reflect this, the relation $\rel$ is pruned each time $\remove_a(B)$ is
processed.
The code on lines 8--13 prepares the
partition-relation pair and all the data structures. First,
$\Split(P,\remove_a(B))$ divides every block $B'$ into $B'\cap\remove_a(B)$
(which cannot simulate states from $\pre_a(B)$ as they have empty intersection
with $\pre_a(\rel(B))$), and $B'\setminus\remove_a(B)$.
More specifically, for a set $\remove\subseteq S$,
$\Split(P,\remove)$ returns a finer partition $P' = \{B\setminus \remove\mid
B\in P\}\cup \{B\cap \remove\mid B\in P\}$. After refining $P$ by the $\Split$
operation, the newly created blocks of $P$ inherit the data structures
(counters $\cnt$ and $\remove$ sets) from  their ``parents'' (for a block $B\in
P$, its parent is the block $B_\prev\in P_\prev$ such that $B\subseteq
B_\prev$). 
$\rel$ is then updated on line 17 by removing the pairs $(C,D)$ such that
$C\cap \pre_a(B)\neq\emptyset$ and $D\subseteq\remove_a(B)$. 
The change of $\rel$ causes that
for some states $u\in S$ and symbols $b\in\labs$,
$\post_a(u)\cap\bigcup\rel(C)$ becomes empty. To propagate the change of the relation along the transition relation, $u$ will be moved into
$\remove_b(C)$ on line 20, which will cause new changes of the relation in the following iterations of the while-loop.
If there is no nonempty $\remove$ set, then
$\lara{P,\rel}$ is the coarsest partition-relation pair inducing $\wordsim$
and the algorithm terminates. 
Correctness of LRT is stated by Theorem \ref{theorem-LRT}.
\begin{theorem}[\cite{abdulla-computing}]\label{theorem-LRT}
With an LTS $T=(S,\labs,\trrelset)$ and the coarsest partition-relation pair
$\lara{P_\init,\rel_\init}$ inducing a preorder $I\subseteq S\times S$
on the input, LRT terminates with the coarsest partition-relation pair
$\lara{P,\rel}$ inducing $\wordsim$.
\end{theorem}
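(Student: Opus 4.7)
The plan is to prove the theorem by identifying a collection of loop invariants, showing they are preserved by each iteration of the while-loop, then using them to derive termination, soundness (the induced relation contains $\wordsim$), completeness (the induced relation is a simulation at termination), and the coarsest property. The key invariants I would formulate are: (I1) the relation $\rho$ induced by $\lara{P,\rel}$ always satisfies $\wordsim\subseteq\rho$; (I2) for every block $B\in P$, symbol $a\in\labs$ and state $v\in S$, $\cnt_a(v,B)=|\post_a(v)\cap \bigcup \above{\rel}{B}|$; (I3) for every such $B$ and $a$, the set $\remove_a(B)$ contains exactly those states $v$ with $\post_a(v)\cap \bigcup\above{\rel}{B}=\emptyset$ whose consequences on $\rel$ have not yet been propagated, and any state $v$ with $\post_a(v)\cap \bigcup \above{\rel}{B}=\emptyset$ that is \emph{not} in $\remove_a(B)$ has already had all pairs $(C,D)$ (with $v\in C$, and $D$ reachable from it via $a$ "above $B$") appropriately pruned; (I4) $P$ is the coarsest partition compatible with $\rel$, i.e., two states lie in the same block iff they have identical $\rel$-above and $\rel$-below sets of blocks. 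The initialization clearly establishes (I1)--(I4), and I would inspect the bookkeeping on lines 8--13 and 18--22 to show that, after refining $P$ by $\Split(P,\remove)$ and copying counters from parent blocks, each invariant is re-established for the refined structure.

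For \emph{termination}, I would observe that each execution of the body of the while-loop either strictly refines $P$ (via $\Split$) or strictly shrinks $\rel$ on line 17, because the selected $\remove_a(B)$ is nonempty and is emptied before anything else happens; combined with the bound $|P|\leq |S|$ and $|\rel|\leq |P|^2$, only finitely many such steps are possible. For \emph{soundness} (preservation of (I1)), the critical step is line 17: if $(C,D)$ is removed then $C\cap\pre_a(B)\neq\emptyset$ and $D\subseteq\remove_a(B)$, so for any $u\in C\cap\pre_a(B)$ with witness $u'\in\post_a(u)\cap B$ and any $v\in D$, (I3) gives $\post_a(v)\cap\bigcup\above{\rel}{B}=\emptyset$; since by the induction hypothesis (I1) we have $\wordsim(u')\subseteq\bigcup\above{\rel}{B}$, no $v'\in\post_a(v)$ can satisfy $(u',v')\in\wordsim$, so $(u,v)\notin\wordsim$, confirming that all pruned pairs lie outside $\wordsim$.

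For \emph{completeness}, I would argue that when the loop exits and all $\remove_a(B)$ are empty, invariant (I3) together with the fact that nothing further is queued implies that for every $(C,D)\in\rel$, every $a\in\labs$, and every $v\in D$, if some $u\in C$ has an $a$-successor in $B$ then $\post_a(v)\cap\bigcup\above{\rel}{B}\neq\emptyset$; unfolding this block-level statement to the state level shows that the induced relation $\rho$ is a simulation, and since $\rho\subseteq\init$ is maintained throughout (the initial pair already induced $\init$ and refinements only shrink it), maximality of $\wordsim$ gives $\rho\subseteq\wordsim$, which combined with (I1) yields $\rho=\wordsim$. For the \emph{coarsest} property, (I4) together with the remark after the definition of the coarsest pair (noting that $\wordsim$ is a preorder) finishes the argument.

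The main obstacle I anticipate is verifying (I2) and (I3) across a refinement step: when $\Split$ creates two children $B\cap\remove$ and $B\setminus\remove$ of a parent $B_\prev$, counters and $\remove$ sets are inherited, and then the code on lines 18--22 adjusts counters only for the specific pair $(a,B)$ being processed and its downstream effects, while leaving every other $(b,C)$ alone. Showing that these inherited-then-patched values agree with the freshly refined $\rel$ for \emph{all} $(b,C)$ pairs requires a careful case analysis on whether $C$ is a newly split child, whether its inherited counter counted successors in $\bigcup\above{\rel_\prev}{C_\prev}$ that now fall outside $\bigcup\above{\rel}{C}$ because of the update on line 17, and whether the corresponding $v$ has already been enqueued. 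This propagation lemma is the technical heart of the argument and is where I would spend the bulk of the proof.
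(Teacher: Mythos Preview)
The paper does not contain a proof of this theorem: it is stated with a citation to \cite{abdulla-computing} and used as a black box to justify the optimizations (Claims~1 and~2). There is therefore no proof in the present paper to compare your proposal against.

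That said, your invariant-based outline is the standard way correctness of this family of partition-refinement algorithms (Henzinger--Henzinger--Kopke, Ranzato--Tapparo, and the labeled variant here) is established in the literature, and the overall decomposition into soundness via (I1), bookkeeping correctness via (I2)--(I3), and the coarsest property via (I4) is appropriate. One point that would need tightening is your termination argument: it is \emph{not} true that every iteration strictly refines $P$ or strictly shrinks $\rel$. If $\remove_a(B)$ happens to be a union of existing blocks and $\pre_a(B)$ meets no block $C$ with a $(C,D)\in\rel$, $D\subseteq\remove$, then the iteration merely empties $\remove_a(B)$ without touching $P$ or $\rel$. The usual fix is a different measure: once $\cnt_a(v,B)$ reaches $0$ and $v$ is enqueued into $\remove_a(B)$, the counter never increases again, so $v$ is never re-enqueued for $B$ or any of its descendants; hence the total number of enqueue operations over the whole run is bounded (by roughly $|\Sigma|\cdot|S|\cdot|P_{\wordsim}|$), and since each iteration dequeues a nonempty set, the loop terminates. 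With that correction your plan is sound, and the ``propagation lemma'' you flag as the technical heart is indeed where the work lies.
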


\section{Optimizations of LRT}
The optimization we are now going to propose reduces the number of counters and the number and the size of
$\remove$ sets. The changes required by OLRT are indicated  in Algorithm \ref{algorithm} on the right hand
sides of the concerned lines.

We will need the following notation. For a state $v\in S$, $\inp v =
\{a\in\labs\mid\pre_a(v)\neq\emptyset\}$ is the set of \emph{input symbols} and
$\out v = \{a\in\labs\mid\post_a(v)\neq\emptyset\}$ is the set of \emph{output
symbols} of $v$.  The \emph{output preorder} is the relation $\outpre =
\bigcap_{a\in\labs} \pre_a(S)\times\pre_a(S)$ 
(this is, ${(u,v)\in\outpre}$ if and only if ${\out u\subseteq\out v}$).

To make our optimization possible, we have to initialize $\lara{P,\rel}$ by the
finer partition-relation pair
$\lara{P_{\init\cap\outpre},\rel_{\init\cap\outpre}}$ (instead of
$\lara{P_{\init},\rel_{\init}}$), which is the coarsest partition-relation pair
inducing the relation $\init\cap\outpre$.  As both $\init$ and $\outpre$ are
preorders, $\init\cap\outpre$ is a preorder too. As $\wordsim\subseteq\init$
and $\wordsim\subseteq\outpre$ (any simulation on $T$ is a subset of
$\outpre$), $\wordsim$ equals the maximal simulation included in
$\init\cap\outpre$. Thus, this step itself does not influence the output of the
algorithm.

Assuming that $\lara{P,\rel}$ is initialized to
$\lara{P_{\init\cap\outpre},\rel_{\init\cap\outpre}}$, we can
observe that for any $B\in P$ and $a\in\labs$ chosen on line 5, the following two claims hold:

\begin{claim}
If $a\not\in\inp B$, then skipping this iteration of the while-loop does not
affect the output of the algorithm.
\end{claim}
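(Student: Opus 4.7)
The plan is to trace through the iteration of the while-loop body for the chosen $(B,a)$ under the hypothesis $a \notin \inp{B}$, which is equivalent to $\pre_a(B) = \emptyset$, and argue that each operation in the body is either inert or semantically benign. First I would handle lines 16--21, the propagation core of the iteration: the outer loop on line 16 selects only blocks $C$ with $C \cap \pre_a(B) \neq \emptyset$, and since $\pre_a(B) = \emptyset$ no such $C$ exists, so lines 17--21 never execute. Consequently no pair is removed from $\rel$, no counter is decremented, and no state is added to any $\remove$-set during this iteration.

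What remains is the refinement on line 10 together with the bookkeeping on lines 11--14. Line 11 rebinds $\rel$ on the partition produced by $\Split$ but preserves the relation it induces on states, and lines 13--14 copy counters and $\remove$-sets verbatim from each parent block to its refined children. The one operation that could still genuinely affect the final output is therefore $\Split(P, \remove_a(B))$ itself, and the goal reduces to showing that $\remove_a(B)$ is a union of existing blocks of $P$, so that the split is the identity.

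For this I would lean on the optimized initialization $\lara{P_{\init \cap \outpre}, \rel_{\init \cap \outpre}}$: every block of $P$ has a common $\out$-set, hence $\pre_a(S)$ is already a union of blocks, and it remains to argue that $\pre_a(\bigcup\rel(B))$ is too. I expect this block-uniformity to be the main obstacle, and I would discharge it by maintaining inductively the invariant that any two states in the same block of $P$ reach precisely the same $\rel$-image blocks under any given label. Granting this invariant, $\remove_a(B) = \pre_a(S) \setminus \pre_a(\bigcup\rel(B))$ is a union of blocks, $\Split$ collapses to the identity, and the entire iteration reduces to emptying $\remove_a(B)$ on line 7; skipping it is then observationally identical to executing it.
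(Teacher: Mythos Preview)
Your treatment of the inner for-loops is correct and matches the paper: with $\pre_a(B)=\emptyset$ the loop over blocks $C$ with $C\cap\pre_a(B)\neq\emptyset$ is vacuous, so $\rel$, the counters, and the other $\remove$ sets are untouched in that iteration. The gap is in your handling of $\Split$. The invariant you propose --- that any two states in the same block of $P$ reach the same $\rel$-image blocks under every label, so that $\pre_a(\bigcup\rel(B))$ is a union of blocks --- is false already at initialization. Take $S=\{b,u_1,u_2,w_1,w_2\}$, $\labs=\{a,c,e\}$, transitions $(u_1,w_1)\in\delta_a$, $(u_2,w_2)\in\delta_a$, $(b,b)\in\delta_c$, $(w_1,w_1)\in\delta_c$, $(w_1,w_1)\in\delta_e$, and $\init=S\times S$. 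Then $P_{\init\cap\outpre}$ has blocks $B=\{b\}$, $C=\{u_1,u_2\}$, $D_1=\{w_1\}$, $D_2=\{w_2\}$; one checks $a\notin\inp{B}$, $\rel(B)=\{B,D_1\}$, and the OLRT initialization gives $\remove_a(B)=\pre_a(S)\setminus\pre_a(\{b,w_1\})=\{u_2\}$, a proper subset of the block $C$. Hence $\Split(P,\remove_a(B))$ genuinely refines $P$; the iteration is \emph{not} inert, and your proposed invariant cannot be maintained.

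The paper does not attempt to show the iteration is a no-op. It argues indirectly: since the processing order of nonempty $\remove$ sets is immaterial, postpone every iteration with $a\notin\inp{B}$ to the end; none of these can repopulate any $\remove$ set or prune $\rel$, so the algorithm halts once the postponed sets are exhausted. If any postponed $\Split$ actually refined $P$, the two resulting children would remain mutually $\rel$-related (nothing is removed from $\rel$ in such an iteration), so the terminal $\lara{P,\rel}$ would fail to be coarsest --- contradicting Theorem~\ref{theorem-LRT} applied to LRT with initial preorder $\init\cap\outpre$. Thus the postponed iterations change neither $\rel$ nor $P$ and may be skipped. This route sidesteps any block-stability invariant, which is essential because no such invariant holds.
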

\begin{proof}
In an iteration of the while-loop processing $\remove_a(B)$ with $a\not\in\inp
B$, as there is no $C\in P$ with $\delta_a(C)\cap \above \rel B \neq
\emptyset$, the for-loop on line 16 stops immediately. No pair $(C,D)$ will be
removed from $\rel$ on line 17, no counter will be decremented, and no state
will be added into a $\remove$ set. The only thing that can happen is that
$\Split(P,\remove)$ refines $P$. However, in this case, this refinement of $P$
would be done anyway in other iterations of the while-loop when processing sets
$\remove_b(C)$ with $b\in\inp C$. To see this, note that correctness of the
algorithm does not depend on the order in which nonempty $\remove$ sets are
processed. Therefore, we can postpone processing all the nonempty
$\remove_a(B)$ sets with $a\not\in\inp B$ to the end of the computation.  Recall
that processing no of these $\remove$ sets can cause that an empty $\remove$
set becomes nonempty. Thus, the algorithm terminates after processing the last
of the postponed $\remove_a(B)$ sets. If processing some of these
$\remove_a(B)$ with $a\not\in\inp B$ refines $P$, $P$ will contain blocks $C,D$
such that both $(C,D)$ and $(D,C)$ are in $\rel$ (recall that when processing
$\remove_a(B)$, no pair of blocks can be removed from $\rel$ on line 17). This
means that the final $\lara{P,\rel}$ will not be coarsest, which is a
contradiction with Theorem \ref{theorem-LRT}. Thus, processing the postponed
$\remove_a(B)$ sets can influence nor $\rel$ neither $P$, and therefore they do not have to be processed at all. 
\end{proof}

\begin{claim}
It does not matter whether we assign $\remove_a(B)$ or $\remove_a(B)\setminus (S\setminus \pre_a(S))$ to $\remove$ on line 6.
\end{claim}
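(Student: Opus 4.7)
The plan is to show that the states in $X = \remove_a(B) \cap (S\setminus\pre_a(S))$ play no operational role in the while-loop body, given the initialization of $\lara{P,\rel}$ to $\lara{P_{\init\cap\outpre},\rel_{\init\cap\outpre}}$. First, I would establish a primary invariant that throughout the algorithm (i) every block $B'\in P$ is either contained in $\pre_a(S)$ or disjoint from $\pre_a(S)$, and (ii) the relation induced by $\lara{P,\rel}$ is a subset of $\outpre$. Both parts follow from the initial values (which refine $\outpre$), since the main loop only refines $P$ and removes pairs from $\rel$.

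Next, I would prove a secondary invariant: for every current block $B'$ and every symbol $a$, $\remove_a(B')\cap(S\setminus\pre_a(S))$ is either $\emptyset$ or equal to $S\setminus\pre_a(S)$. This reduces to checking the four sites where $\remove_a(B')$ is modified: line 4 (initialization), which puts all of $S\setminus\pre_a(S)$ into $\remove_a(B')$; line 7, which resets $\remove_a(B')$ to $\emptyset$; line 10, where a split child inherits the set of its parent; and line 21, which only adds states $v$ satisfying $v\in\pre_b(D)\subseteq\pre_b(S)$. The invariant is preserved at each site.

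With both invariants in hand, I analyse the two uses of $\remove$ in the body. On line 11, $P\ass\Split(P,\remove)$: a block $B'\subseteq\pre_a(S)$ has $B'\cap X=\emptyset$ and is split identically by $\remove$ and $\remove\setminus X$; a block $B'\subseteq S\setminus\pre_a(S)$ is, by the secondary invariant, either wholly contained in $\remove$ or disjoint from it, and in either case $\Split$ returns the same fragments whether we drop $X$ or not. On the inner loop starting at line 16, blocks $D$ with $D\subseteq X$---the only ones dropped when replacing $\remove$ by $\remove\setminus X$---have $D\subseteq S\setminus\pre_a(S)$, while any $C$ satisfying $C\cap\pre_a(B)\neq\emptyset$ at line 15 has $C\subseteq\pre_a(S)$ by invariant (i). A pair $(C,D)\in\rel$ would then force $a\in\out c\subseteq\out d$ for some $c\in C$ and $d\in D$, contradicting $d\notin\pre_a(S)$ via invariant (ii). Hence $(C,D)\notin\rel$, the test on line 18 fails, no counter is decremented, and no $\remove$ set is touched.

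I anticipate the main technical work to be the secondary invariant, since one must separately argue that each of the four update sites for $\remove_a(B')$ respects the ``all or nothing'' property on $S\setminus\pre_a(S)$. Once that is in place, the split and inner-loop analyses give that the two possible assignments on line 6 produce identical subsequent states of $\lara{P,\rel}$, and hence identical outputs.
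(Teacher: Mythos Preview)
Your proposal is correct and follows essentially the same approach as the paper: both arguments establish that $\remove_a(B)\cap(S\setminus\pre_a(S))$ is all-or-nothing, that $P$ is already refined by $\pre_a(S)$ so the $\Split$ is unaffected, and that the $\outpre$ constraint on $\rel$ rules out any pair $(C,D)$ with $D\subseteq S\setminus\pre_a(S)$ reaching line~17. The only cosmetic difference is organizational---you package these facts as two explicit invariants and argue $\Split$ block-by-block, whereas the paper states them inline and invokes a small abstract lemma about $\Split$.
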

\begin{proof}
Observe that $v$ with $a\not\in\out v$ (i.e., $v\in S\setminus\pre_a(S)$)
cannot be added into $\remove_a(B)$ on line 20, as this would mean that $v$ has
an $a$-transition leading to $D$. Therefore, $v$ can get into $\remove_a(B)$
only during initialization on line 4 together with all states from
$S\setminus\pre_a(S)$. After $\remove_a(B)$ is processed (and emptied) for the
first time, no state from $S\setminus\pre_a(S)$ can appear there again.  Thus,
$\remove_a(B)$ contains states from $S\setminus\pre_a(S)$ only when it is
processed for the first time and then it contains all of them. It can be shown
that for any partition $Q$ of a set $X$ and any $Y\subseteq X$, if $\Split(Q,Y)
= Q$, then also for any $Z\subseteq X$ with  $Y\subseteq Z$, $\Split(Q,Z) =
\Split(Q,Z\setminus Y)$.  As $P$ refines $P_{\init\cap\outpre}$, $\Split(P,S
\setminus\pre_a(S)) = P$. Therefore, as $S\setminus\pre_a(S) \subseteq
\remove_a(B)$, $\Split(P,\remove_a(B)) =
\Split(P,\remove_a(B)\setminus(S\setminus\pre_a(S)))$. We have shown that
removing $S\setminus\pre_a(S)$ from $\remove$ does not influence the result of
the $\Split$ operation in this iteration of the while-loop (note that this
implies that all blocks from the new partition are included in or have empty
intersection with $S\setminus\pre_a(S)$). 
It remains to show that it also does not influence updating $\rel$ on line 17.
Removing $S\setminus\pre_a(S)$ from $\remove$ could only cause that the blocks
$D$ such that $D\subseteq S\setminus\pre_a(S)$ that were chosen on line 15 with
the original value of $\remove$ will not be chosen with the restricted
$\remove$. Thus, some of the pairs $(C,D)$ removed from $\rel$ with the
original version of $\remove$ could stay in $\rel$ with the restricted version
of $\remove$. However, such a pair $(C,D)$ cannot exist because with the
original value of $\remove$, if $(C,D)$ is removed from $\rel$, then $a\in\out
C$ (as $\delta(C)\cap B\neq\emptyset$) and therefore also $a\in\out D$ (as
$\rel$ was initialized to $\rel_{\init\cap\outpre}$ on line 1 and
$(C,D)\in\rel$). Thus, 
$D\cap (S\setminus\pre_a(S))=\emptyset$, which means that $(C,D)$ is removed
from $\rel$ even with the restricted $\remove$. Therefore, it does not matter
whether $S\setminus\pre_a(S)$ is a subset of or it has an empty intersection
with $\remove$.  \end{proof}


As justified above, we can optimize LRT as follows.  Sets
$\remove_a(B)$ are computed only if $a \in \inp B$ and in that case we only add
states $q \in \pre_a(S)$ to $\remove_a(B)$. As a result, we can reduce the
number of required counters by maintaining $\cnt_a(v,B)$ if and only if
$a\in\inp B$ and $a\in\out v$.

\section{Implementation and Complexity of OLRT}

We first briefly describe the essential data structures (there are some additional data structures required by our optimizations) and then we sketch the
complexity analysis.  For the full details, see the technical report
\cite{techrep}.

\paragraph{Data Structures.}
The input LTS is represented as a list of records about its states. The record
about each state $v \in S$ contains a list of nonempty $\pre_a(v)$
sets\footnote{We use a list rather than an array having an entry for each $a
\in \labs$ in order to avoid a need to iterate over alphabet symbols for which
there is no transition.}, each of them encoded as a list of its members.  The
partition $P$ is encoded as a doubly-linked list (DLL) of blocks. Each block is
represented as a DLL of (pointers to) states of the block.  Each block $B$
contains for  each $a\in\labs$ a list of (pointers on) states from
$\remove_a(B)$. Each time when any set $\remove_a(B)$ becomes nonempty, block
$B$ is moved to the beginning of the list of blocks. Choosing the block $B$ on
line 5 then means just scanning the head of the list of blocks.  The relation
$\rel$ is encoded as a resizable boolean matrix.

Each block $B\in P$ and each state $v\in S$ contains an $\labs$-indexed array
containing a record $B.a$ and $v.a$, respectively.  The record $B.a$ stores the
information whether $a\in\inp B$ (we need the test on $a\in\inp B$ to take a
constant time), 
If $a\in\inp B$, then $B.a$ also contains a reference to the set
$\remove_a(B)$, represented as a list of states (with a constant time
addition), and a reference to an array of counters $B.a.\cnt$ containing the
counter $\cnt_a(v,B)$ for each $v\in\pre_a(S)$. 
Note that for two different symbols $a,b\in\labs$ and some $v\in S$, the
counter $\cnt_a(v,B)$ has different index in the array $B.a.\cnt$ than the
counter $\cnt_b(v,B)$ in $B.b.\cnt$ (as the sets $\pre_a(S)$ and $\pre_b(S)$
are different).  Therefore, for each $v\in S$ and $a\in\labs$,  $v.a$ contains
an index $v_a$ under which for each $B\in P$, the counter $\cnt_a(v,B)$ can be
found in the array $B.a.\cnt$. Using the $\labs$-indexed arrays attached to
symbols and blocks, every counter can be found/updated in a constant time.  For
every $v\in S,a\in\labs$, $v.a$ also stores a pointer to the list containing
$\pre_a(v)$ or $null$ if $\pre_a(v)$ is empty.  This allows the constant time
testing whether $a\in\inp v$ and the constant time searching for the
$\pre_a(v)$ list. 

\paragraph{Complexity analysis (Sketch).}
We first point out how our optimizations influence complexity of the most costly part of the code which is the main while loop.
The analysis of lines 14--16 of LRT is based on the observation that for any two
$B',D'\in P_\wordsim$ and any $a\in\labs$, it can happen at most once that $a$
and some $B$ with $B'\subseteq B$ are chosen on line 14 and at the same time
$D'\subseteq \remove_a(B)$.  In one single iteration of the while-loop, blocks $C$
are listed by traversing all $\pre(v),v\in B$ (the $D$s can be enumerated
during the $\Split$ operation).
%
Within the whole computation, for any $B'\in  P_\wordsim$, transitions leading
to $B'$ are traversed on line 14 at most $P_\wordsim$ times, so the complexity
of lines 14--16 of LRT is $\O(\sum_{a\in\labs}\sum_{D\in
P_\sim}\sum_{v\in S}|\pre_a(v)|) = \O(|P_\wordsim||\delta|)$.
%
%
In the case of OLRT, the number and the content of remove sets is
restricted in such a way that for a nonempty set $\remove_a(B)$, it holds that $a\in\inp B$ and
$\remove_a(B)\subseteq \pre_a(S)$. Hence, for a fixed $a$, $a$-transition
leading to a block $B'\in P_\wordsim$ can be traversed only $|\{D'\in
P_\wordsim\mid a\in\out {D'}\}|$ times and the complexity of lines 14--16
decreases to $\O(\sum_{D\in P_\wordsim}\sum_{a\in\inp D}|\delta_a|)$.

The analysis of lines 17--20 of LRT is based on the fact that once $(C,D)$ appears on
line 17, no $(C',D')$ with $C'\subseteq C, D'\subseteq D$ can appear there
again. For a fixed $(C,D)$, the time spent on lines 17--20 is in $\O(\sum_{v\in
B}|\pre(v)|)$ and only those blocks $C,D$ can meet on line 17 such that
$C\times D\subseteq \init$. Thus, the overall time spent by LRT on lines 17--20
is in $\O(\sum_{B\in P_\wordsim}\sum_{v\in\init(B)}|\pre(v)|)$.  In OLRT,
blocks $C,D$ can meet on line 17 only if $C\times D\subseteq\init\cap\outpre$,
and the complexity of lines 17--20 in OLRT decreases to $\O(\sum_{B\in
P_\wordsim}\sum_{v\in(\init\cap\outpre)(B)}|\pre(v)|)$.

Additionally, OLRT refines $\lara{P_\init,\rel_\init}$ to
$\lara{P_{\init\cap\outpre},\rel_{\init\cap\outpre}}$ on line 1. This can be
done by successive splitting according to the sets $\pre_a(S),a\in\labs$ and
after each split, breaking the relation between blocks included in $\pre_a(S)$
and the ones outside. 
This procedure takes time $\O(|\labs||P_{\init\cap\outpre}|^2)$.

Apart from some other smaller differences, the implementation and the
complexity analysis of OLRT are analogous to the implementation and the
analysis of LRT \cite{abdulla-computing}.
%
%
%
%
%
The overall time complexity of OLRT is 
$
\O\bigl( |\labs||P_{\init\cap\outpre}|^2 +  |\labs||S|+  |P_{\wordsim}|^2 + 
\sum_{B\in P_\wordsim}
(\sum_{a\in\inp B} (|\pre_a(S)| + |\delta_a|)+
\sum_{v\in(\init\cap\outpre)(B)}|\pre(v)|)\bigr).
$

The space complexity of OLRT is determined by the number of counters, the contents of
the $\remove$ sets, the size of the matrix encoding of $\rel$, and the space needed for storing the $B.a$ and $v.a$ records (for every block $B$, state $v$ and symbol $a$).  
Overall, it gives $\O(|P_\wordsim|^2 + |\Sigma||S| + \sum_{B\in
P_\wordsim}\sum_{a\in\inp B}|\delta_a^{-1}(S)|)$.  

Observe that the improvement of both time and space complexity of LRT 
is most significant for systems with large
alphabets and a high diversity of sets of input and output symbols of states.
Certain regular diversity of sets of input and output
symbols is an inherent property of LTSs that arise when we compute simulations over
tree automata. We address the impact of employing OLRT within the procedures for computing tree automata simulation in the next section.

\section{Tree Automata Simulations}\label{tass}
In \cite{abdulla-computing}, authors  propose methods for computing tree
automata simulations via translating problems of computing simulations over
tree-automata to problems of computing simulations over certain LTSs. In this
section, we show how replacing LRT by OLRT within these translation-based
procedures decreases the overall complexity of computing tree-automata simulations. 

A (finite, bottom-up) \emph{tree automaton} (TA) is a quadruple
$A = (Q, \Sigma,\Delta, F)$  where $Q$ is a finite
set of states, $F \subseteq Q$ is a set of final states, $\Sigma$ is a ranked
alphabet with a ranking function $\rankof:\Sigma \rightarrow \nat$, and
$\Delta\subseteq Q^*\times\Sigma\times Q$ is a set of transition rules such
that if $(q_1\ldots q_n,f,q)\in\Delta$, then $\rankof(f) = n$.  Finally, we
denote by $\maxrank\Aref$ the smallest $n \in \nat$ such  that $n \geq m$ for
each $m \in \nat$ such that there is some $(q_1\ldots q_m,f,q) \in\Delta$.  
We omit the definition of the semantics of TA as we will not need it,
and we only refer to \cite{tata,abdulla-computing}.

For the rest of this section, we fix a TA $A = (Q,\Sigma,\Delta,F)$. A
\emph{downward simulation} $D$ is a binary relation on $Q$ such that if
$(q,r)\in D$, then for all $\trans q n f q\in\Delta$, there exists $\trans r n
f r\in\Delta$ such that $(q_i,r_i)\in D$ for each $i:1\leq i\leq n$.  Given a
downward simulation $\lpre$ which is a preorder called an \emph{inducing
simulation}, an \emph{upward simulation $\us$ induced by $\lpre$} is a binary
relation on $Q$ such that if $(q,r)\in\us$, then (i) for all $\trans q n f
{q'}\in\Delta$ with $q_i=q,1 \leq i \leq n$, there exists $\trans r n f
{r'}\in\Delta$ with $r_i=r$, $(q',r')\in \us$, and $(q_j,r_j) \in \lpre$ for
each $j:1\leq j\neq i\leq n$; (ii) $q\in F\implies r\in F$.
From now on, let $\downsim$ denote the maximal downward simulation on $A$ and
$\upsim$ the maximal upward simulation on $A$ induced by $\downsim$.


To define the translations from downward and upward simulation problems, we
need the following notions. Given a transition $t = (q_1\ldots
q_n,f,q)\in\Delta$,  $q_1\ldots q_n$ is its \emph{left-hand side} and $t(i)\in
(Q\cup\{\hole\})^*\times \Sigma\times Q$ is an \emph{environment}---the tuple
which arises from $t$ by replacing state $q_i$, $1 \leq i \leq n$, at the
$i^{th}$ position of the left-hand side~of $t$ by the so called hole $\hole
\not\in Q$. We use $\lhsof$ of to denote the set of all left-hand sides of $A$
and $\envof$ to denote the set of all environments~of~$A$.


We translate the downward simulation problem on $A$ to the simulation problem
on the LTS  $A^\bullet = (Q^\bullet,\labs^\bullet,\{\delta_a^\bullet\mid a\in\labs^\bullet\})$
where $Q^\bullet = \{q^\bullet\mid q\in Q\}\cup\{l^\bullet\mid l\in\lhsof\Aref\}$, $\Sigma^\bullet = \Sigma\cup
\{1,\ldots, \maxrank\}\}$, 
and for each $(q_1\ldots q_n,f,q)\in\Delta$,  $(q^\bullet,
q_1\ldots q_n^\bullet)\in\delta^\bullet_f$ and $(q_1\ldots q_n^\bullet,
q_i^\bullet)\in\delta^\bullet_i$ for each  $i:1\leq i \leq n$. 
The initial relation is simply $\init^\bullet = Q^\bullet\times
Q^\bullet$. 
The upward simulation problem is then translated into a simulation problem on
LTS $A^\odot = (Q^\odot,\Sigma^\odot,\{\delta^\odot_a\mid a\in\Sigma^\odot\})$,
where $Q^\odot = \{q^\odot\mid q\in Q\}\cup\{e^\odot \mid e\in\envof\Aref\}$,
$\Sigma^\odot = \Sigma^\bullet$, and for each $t=(q_1\ldots q_n,f,q)\in\Delta$, for each
$1\leq i\leq n$, $(q_i^\odot,t(i)^\odot)\in\delta^\odot_{i}$ and $(t(i)^\odot,q^\odot)\in\delta^\odot_a$.
The initial relation $I^\odot \subseteq
Q^\odot\times Q^\odot$ contains all the pairs
$(q^\odot,r^\odot)$  such that $q,r\in Q$ and $r\in F\implies q\in F$, and
$((q_1\ldots q_n,f,q)(i)^\odot,(r_1 \ldots r_n,f,r)(i)^\odot)$ such that
$(q_j,r_j)\in\downsim$ for all $j:1\leq i\neq j \leq n$. Let $\sim^\bullet$ be
the maximal simulation on $A^\bullet$ included in $\init^\bullet$ and let
$\sim^\odot$ be the maximal simulation on $A^\odot$ included in $\init^\odot$.
The following theorem shows correctness of the translations.
\begin{theorem}[\cite{abdulla-computing}] \label{tree:sim:theorem} For all
$q,r\in Q$, we have $(q^\bullet,r^\bullet) \in\sim^\bullet$ if and only if $(q,r)\in
\downsim$ and $(q^\odot,r^\odot) \in\sim^\odot$ if and only if $(q,r)\in
\upsim$. \end{theorem}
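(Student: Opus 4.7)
The plan is to prove each of the two biconditionals by a standard ``back-and-forth'' argument: for the ``if'' direction, exhibit a concrete relation on the LTS, verify it is a simulation included in the relevant initial preorder, and conclude by maximality of $\sim^\bullet$, respectively $\sim^\odot$; for the ``only if'' direction, project the LTS simulation back to the TA-states and verify it satisfies the axioms of a downward, respectively induced upward, simulation, concluding by maximality of $\downsim$, respectively $\upsim$.

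For the downward translation I would define $R\subseteq Q^\bullet\times Q^\bullet$ as $\{(q^\bullet,r^\bullet):(q,r)\in\downsim\}\cup\{(q_1\ldots q_n^\bullet, r_1\ldots r_n^\bullet): (q_i,r_i)\in\downsim \text{ for every } i\}$ and check the two simulation cases: from a TA-state $q^\bullet$, any $\delta_f^\bullet$-successor arises from a rule $(q_1\ldots q_n,f,q)\in\Delta$, and the downward-simulation clause for $(q,r)$ supplies a matching rule $(r_1\ldots r_n,f,r)\in\Delta$ whose left-hand side realises the componentwise $\downsim$ required by $R$; from a left-hand side $l^\bullet$ the $\delta_i^\bullet$-successor is uniquely the $i$-th component, so the match is immediate. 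Since $\init^\bullet=Q^\bullet\times Q^\bullet$, this yields $R\subseteq\sim^\bullet$. For the converse I would set $D=\{(q,r):(q^\bullet,r^\bullet)\in\sim^\bullet\}$ and recover the downward axiom by composing two simulation steps: an $f$-step from $q^\bullet$ to $q_1\ldots q_n^\bullet$ forces a matching $f$-step from $r^\bullet$ to some $r_1\ldots r_n^\bullet$, and then each $i$-step $q_1\ldots q_n^\bullet\to q_i^\bullet$ must be mimicked by the uniquely determined $i$-step to $r_i^\bullet$, so $(q_i,r_i)\in D$ and $D$ is a downward simulation. Maximality of $\downsim$ then gives $D\subseteq\downsim$.

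The upward direction follows the same template but requires more delicate bookkeeping around the initial preorder $\init^\odot$. For the ``if'' direction I would set $R=\{(q^\odot,r^\odot):(q,r)\in\upsim\}\cup\{(t(i)^\odot,s(i)^\odot): t=(q_1\ldots q_n,f,q),\ s=(r_1\ldots r_n,f,r),\ (q,r)\in\upsim,\ (q_j,r_j)\in\downsim \text{ for } j\neq i\}$, check $R\subseteq\init^\odot$ directly from the definition of $\init^\odot$ and the final-state clause of $\upsim$, and then verify simulation: an $i$-step from $q^\odot$ into an environment $t(i)^\odot$ is matched using clause~(i) of the upward-simulation definition, while the $f$-step from $t(i)^\odot$ back to its parent $q^\odot$ is matched by the parent of the chosen $s$. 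For the ``only if'' direction, set $U=\{(q,r):(q^\odot,r^\odot)\in\sim^\odot\}$; clause~(ii) drops out of $\sim^\odot\subseteq\init^\odot$, while clause~(i) is reconstructed in two steps: the $i$-step from $q^\odot$ to $t(i)^\odot$ forces a matching $i$-step from $r^\odot$ to some $s(i)^\odot$ with $r_i=r$; the inclusion $\sim^\odot\subseteq\init^\odot$ then forces $s$ to share the symbol $f$ and the position $i$ with $t$ and forces $(q_j,r_j)\in\downsim$ off-axis; finally, the $f$-step $t(i)^\odot\to q'^\odot$ must be matched by the unique $f$-step $s(i)^\odot\to r'^\odot$, yielding $(q',r')\in U$. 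The hardest part is precisely this upward argument: one must check both that the candidate $R$ respects the restrictive $\init^\odot$ (so it survives intersection with the initial preorder) and that $\sim^\odot\subseteq\init^\odot$ encodes exactly the information---off-axis $\downsim$-relatedness and the final-state clause---needed to rebuild the upward-simulation axioms.
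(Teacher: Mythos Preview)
The paper does not actually prove this theorem; it is simply quoted from \cite{abdulla-computing} and stated without proof. Your proposal therefore goes well beyond what the present paper contains, and it follows exactly the standard back-and-forth argument that one finds in the original source: build a candidate simulation on the derived LTS from the TA simulation for one direction, and project the LTS simulation back to TA states for the other, in each case invoking maximality.

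Your argument is correct. A couple of small points worth making explicit in a write-up: in the downward ``only if'' direction, the fact that the matching $f$-successor of $r^\bullet$ has the \emph{same} arity $n$ as $q_1\ldots q_n^\bullet$ relies on $\Sigma$ being ranked, and in the upward direction the uniqueness of the $f$-successor of an environment (so that the two-step unfolding pins down $r'$) is what makes the reconstruction of clause~(i) go through. You use both facts implicitly and correctly. Note also that the paper's definition of $\init^\odot$ on TA-state pairs contains what appears to be a typo in the direction of the final-state implication (and similarly the label on the environment-to-state transition should read $f$ rather than $a$); your proof is written against the intended definitions and is unaffected.
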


The states of the LTSs ($A^\bullet$ as well as $A^\odot$) can be classified
into several classes according to the sets of input/output
symbols. Particularly, $Q^\bullet$ can be classified into the
classes $\{q^\bullet\mid q\in Q\}$ and for each $n:1\leq n\leq\maxrank\Aref$,
\{$q_1\ldots q_n^\bullet\mid q_1\ldots q_n\in\lhsof\Aref\}$, and $Q^\odot$ can
be classified into $\{q^\odot\mid q\in Q\}$ and for each $a\in\Sigma$ and
$i:1\leq i\leq \rankof(a)$, $\{t(i)^\odot\mid t = (q_1\ldots
q_n,a,q)\in\Delta\}$. This turns to a significant advantage when computing
simulations on $A^\bullet$ or on $A^\odot$ using OLRT instead of LRT. Moreover, we now propose another
small optimization, which is a specialized procedure for
computing $\lara{P_{\init\cap\outpre}\rel_{\init\cap\outpre}}$ for the both of
$A^\odot$, $A^\bullet$.  It is based on the simple observation that we need
only a constant time (not a time proportional to the size of the alphabet) to determine
whether two left-hand sides or two environments are related by the particular $\outpre$ (more
specifically, $(e_1^\odot,e_2^\odot)\in\outpre$ if and only if the inner symbols of
$e_1$ and $e_2$ are the same, and $(q_1\ldots
q_n^\bullet,r_1\ldots r_m^\bullet)\in\outpre$ if and only if $n\leq m$).

\paragraph{Complexity of the Optimized Algorithm.} 
We only point out the main differences between application of 
LRT \cite{abdulla-computing} and OLRT on the LTSs that arise from the translations described above.  For implementation details and full complexity
analysis of the OLRT versions, see the technical report \cite{techrep}.

To be able to express the complexity of running OLRT on $A^\bullet$
and $A^\odot$, we extend $\downsim$ to the set $\lhsof\Aref$ such that
$(\vect q n,\vect r n)\in \downsim$ if and only if  $(q_i,r_i)\in\downsim$ for each
$i:1\leq i\leq n$, and we extend $\upsim$ to the set $\envof\Aref$ such that
\\$((q_1 \dotsc q_n,f,q)(i),(r_1 \dotsc r_n,f,r)(i))
\in\upsim {\iff} m=n \wedge i=j \wedge (q,r)\in\upsim
\wedge (\forall k \in \{ 1, ..., n\}.\ k\neq i \Longrightarrow \
(q_k,r_k)\in\downsim)$.  For a preorder $\rho$ over a set $X$, we use
$\prt{X}{\rho}$ to denote the partition of $X$ according to the equivalence
$\rho\cap\rho^{-1}$. 

The procedures for computing $\sim^\bullet$ and $\sim^\odot$ consist of (i)
translating $A$ to the particular LTS ($A^\bullet$ or $A^\odot$) and computing
the partition-relation pair inducing the initial preorder ($\init^\bullet$ or
$\init^\odot$), and (ii) running a simulation algorithm (LRT or OLRT) on it.
Here, we analyze the impact of replacing LRT by OLRT on the complexity of step
(ii), which is the step with dominating complexity (as shown in
\cite{abdulla-computing} and also by our experiments; step (ii) is much more
computationally demanding than step (i)). 

As shown in the technical report  \cite{techrep}, OLRT takes on $A^\bullet$ and $\init^\bullet$ space
$\O(Space_D)$ where  
$Space_D {=} 
(\maxrank + |\labs|)|\lhsof\cup Q| +
|\lhsof\cup Q/\downsim|^2 +
|\labs||\lhsof/\downsim||Q| + \maxrank|Q/\downsim||\lhsof|
$ and time
$\O(Space_D +
|\labs||Q/\downsim|^2 +
|\lhsof/\downsim||\Delta|
)$.
On $A^\odot$ and $\init^\odot$, OLRT runs in 
time $O(Space_U)$ where
$Space_U = (\maxrank + |\labs|)|\envof| +
|\envof/U|^2 + |\envof/U||Q| + |Q/U||\envof|$ and time $\O(Space_U +
|\labs||Q/U|^2 + |\envof/U||\delta|)$. 

We compare the above results with \cite{abdulla-computing}, where LRT is used.
LRT on $A^\bullet$ and $\init^\bullet$ takes $\O(Space_D^{old})$ space where
$Space_D^{old} = (|\Sigma|+ \maxrank)|Q\cup\lhsof||\prt{Q\cup\lhsof}\downsim|$,
and $\O(Space_D^{old} + |\Delta||\prt{Q\cup\lhsof}{\downsim}|)$ time.  In the
case of $A^\odot$ and $I^\odot$, we obtain space complexity $\O(Space_U^{old})$ where
$Space_U^{old} = |\Sigma||\envof||\prt{\envof}\upsim|$ and time complexity
$\O(Space_U^{old} + \maxrank|\Delta||\prt{\envof}\upsim|)$.

The biggest difference is in the space complexity (decreasing
the factors $Space_D^{old}$ and $Space_U^{old}$).  However, the time complexity is
better too, and our experiments show a significant improvement in space as well as in time.

%
%
%
%
%

\section{Experiments}

We implemented the original and the improved version of the algorithm in a
uniform way in OCaml and experimentally compared their performance.

\tablei


The simulation algorithms were benchmarked using LTSs obtained from the runs of the abstract regular model checking (ARMC)
(see~\cite{vojnar05,vojnar04}) on several classic examples---producer-consumer (pc), readers-writers
(rw), and list reversal (lr)---and using a set of tree automata obtained from the run of the abstract regular tree model checking (ARTMC)
(see~\cite{antichain}) on several operations, such as list reversal,
red-black tree balancing, etc. We also used several randomly generated LTSs and tree automata.

\tableii


We performed the experiments on AMD Opteron 8389 2.90 GHz PC with 128 GiB of memory (however we set the
memory limit to approximately 20 GiB for each process). The system was running Linux and OCaml 3.10.2. 

The performance of the algorithms is compared in Table \ref{res1} (general LTSs), Table \ref{res2}
(LTSs generated while computing the downward simulation), and Table \ref{res3} (LTSs generated while
computing the upward
simulation), which contain the running times ([s]) and the amount of memory ([MiB]) required to finish
the computation.

\tableiii

As seen from the results of our experiments, our optimized implementation performs substantially better than the original.
On average, it improves the running time and space requirements by about one order of magnitude. As expected,
we can see the biggest improvements especially in the cases, where we tested the impact of the growing
size of the alphabet.

\section{Conclusion}
We proposed an optimized algorithm for computing simulations over LTSs,
which improves the asymptotic complexity in both space and time of the
best algorithm (LRT) known to date (see~\cite{abdulla-computing}) and which also
performs significantly better in practice. We also show how employing OLRT
instead of LRT reduces the complexity of the procedures for computing tree-automata
simulations from \cite{abdulla-computing}.
%
%
As our future work, we want to develop further optimizations, which would allow
to handle even bigger LTSs and tree automata. One of the possibilities is to
replace existing data structures by a symbolic representation, for example, by
using BDDs.

{
\medskip\noindent
{\bfseries Acknowledgements. }
This work was supported in part by the Czech Science Foundation (projects
P103/10/0306, 102/09/H042, 201/09/P531), 
the BUT FIT grant FIT-10-1,
the Czech COST project OC10009
associated with the ESF COST action IC0901, the Czech Ministry of Education by
the project MSM 0021630528, and the ESF project Games for Design and
Verification.

}

\bibliography{literature}

\begin{thebibliography}{10}

\bibitem{abdulla-computing}
P.A. Abdulla, A.~Bouajjani, L.~Hol\'{\i}k, L.~Kaati, and T.~Vojnar.
\newblock {Computing Simulations over Tree Automata: Efficient Techniques for
  Reducing Tree Automata}.
\newblock In {\em Proc. of TACAS'08}, LNCS~4963. Springer, 2008.

\bibitem{antichain}
A.~Bouajjani, P.~Habermehl, L.~Hol\'{i}k, T.~Touili, and T.~Vojnar.
\newblock {Antichain-Based Universality and Inclusion Testing over
  Nondeterministic Finite Tree Automata}.
\newblock In {\em Proc. of CIAA'08}, LNCS~5148. Springer, 2008.

\bibitem{vojnar05}
A.~Bouajjani, P.~Habermehl, P.~Moro, and T.~Vojnar.
\newblock {Verifying Programs with Dynamic 1-Selector-Linked Structures in
  Regular Model Checking}.
\newblock In {\em Proc. of TACAS'05}, LNCS~3440. Springer, 2005.

\bibitem{vojnar04}
A.~Bouajjani, P.~Habermehl, and T.~Vojnar.
\newblock {Abstract Regular Model Checking}.
\newblock In {\em Proc. of CAV'04}, LNCS~3114. Springer, 2004.

\bibitem{tata}
H.~Comon, M.~Dauchet, R.~Gilleron, C.~L\"oding, F.~Jacquemard, D.~Lugiez,
  S.~Tison, and M.~Tommasi.
\newblock {Tree Automata Techniques and Applications}.
\newblock \\\url{http://www.grappa.univ-lille3.fr/tata}, 2007.
\newblock release October, 12th 2007.

\bibitem{dams-generation}
D.~Dams, O.~Grumberg, and R.~Gerth.
\newblock {Generation of Reduced Models for Checking Fragments of {CTL}}.
\newblock In {\em Proc. of CAV'93}, 1993.

\bibitem{grumberg-model}
O.~Grumberg and D.~E. Long.
\newblock {Model Checking and Modular Verification}.
\newblock {\em ACM Transactions on Programming Languages and Systems}, 16,
  1994.

\bibitem{henzinger-computing}
M.~R. Henzinger, T.~A. Henzinger, and P.~W. Kopke.
\newblock {Computing simulations on finite and infinite graphs}.
\newblock In {\em Proc. of FOCS'95}. IEEE Computer Society, 1995.

\bibitem{techrep}
L.~Hol\'{\i}k and J.~\v{S}im\'{a}\v{c}ek.
\newblock {Optimizing an LTS-Simulation Algorithm}.
\newblock Technical Report FIT-TR-2009-03, Brno University of Technology, 2009.
\newblock \\\url{http://www.fit.vutbr.cz/~holik/pub/FIT-TR-2009-03.pdf}.

\bibitem{FITPUB9733}
Luk\'{a}\v{s} Hol\'{i}k and Ji\v{r}\'{i} \v{S}im\'{a}\v{c}ek.
\newblock Optimizing an lts-simulation algorithm.
\newblock {\em Computing and Informatics}, 2010(7):1337--1348, 2010.

\bibitem{piage-three}
R.~Piage and R.~Tarjan.
\newblock {Three Partition Refinement Algorithms}.
\newblock {\em SIAM Journal on Computing}, 16, 1987.

\bibitem{ranzato-new}
F.~Ranzato and F.~Tapparo.
\newblock {A New Efficient Simulation Equivalence Algorithm}.
\newblock In {\em Proc. of LICS'07}, 2007.

\bibitem{sawa-behavioural}
Z.~Sawa and P.~Jan\v{c}ar.
\newblock {Behavioural Equivalences on Finite-State Systems are PTIME-hard}.
\newblock {\em Computing and Informatics}, 24, 2005.

\end{thebibliography}


\end{document}